\documentclass[a4paper,aps,pra,showpacs,showkeys,preprintnumbers]{revtex4}
\usepackage[utf8]{inputenc}
\usepackage[english]{babel}
\usepackage[T2A]{fontenc}
\usepackage{amssymb,amsfonts,amsmath,mathtext,enumerate,float,dsfont}
\usepackage{graphics,graphicx,epsfig,epstopdf} 
\usepackage{caption}
\usepackage{cmap}    
\usepackage{indentfirst}
\usepackage[usenames]{color}
\usepackage{amsthm}

\renewcommand{\Im}{\mathop{\mathrm{Im}}\nolimits}
\renewcommand{\Re}{\mathop{\mathrm{Re}}\nolimits}
\newtheorem*{theorem}{Theorem}
\newtheorem{corollary}{Corollary}
\newtheorem{exmp}{Example}

\begin{document}
\title{Criteria of minimum squeezing for quantum cluster state generation}
\author{S. B.\, Korolev}
\email{Sergey.Koroleev@gmail.com}
\author{A. D.\, Manukhova}
\author{K. S.\, Tikhonov}
\author{T. Yu.\, Golubeva}
\author{Yu. M.\, Golubev}
\affiliation{St. Petersburg State University, St. Petersburg, 199034 Russia}
\pacs{02.10.Ox, 02.40.Re, 03.67.Ac, 03.67.Bg, 03.67.Lx}
\keywords{Quantum computation, One-way quantum computation, Entanglament, Cluster states, Squeezed states, Graph topology, Squeezing degree}

\begin{abstract}
In this paper, we assess possibilities of generating cluster states with different topologies being possessed of a finite squeezing resource of the initial oscillators used to generate a cluster state. We obtained the  condition on minimum squeezing required for generating a cluster with a given topology as a simple estimation in terms of the coefficients of the adjacency matrix.
\end{abstract}
\maketitle

\section{Introduction}
 One of the possible ways to create a universal quantum computer is one-way quantum computation  \cite{Raus2}. Its implementation requires a kind of "resource" -- a physical system in a quantum cluster state. Every single element of such a system is connected with one or more others via quantum entanglement so that all together they form a complex physical structure. Here, the entanglement is a  multipartite-entanglement or inseparability in terms of \cite{Furusawa}. Mathematically, such a structure can be described by an undirected graph whose nodes are the elements of the physical system, and the edges are quantum entanglements between them. The type of logic gates, implemented by sequential local measurements over single nodes, depends on the configuration (or topology) of the graph. Since in process of the local measurements quantum entanglement between nodes \textit{reduces}, the cluster state dimension ("the amount of resource" ) gradually decreases.That makes the entire process irreversible or one-way. However, it was shown that such an approach to build a universal quantum computer is effective and in no way inferior to quantum computing on reversible quantum logic gates  \cite{Rausend}.

To date, various schemes of one-way quantum computations have been proposed for both discrete \cite{Raus2} and continuous variables  \cite{Menicucci_1}. Some of them have been successfully realized experimentally \cite{Walther,Kaltenbaek}. 
At that, the main obstacle to their implementation in effective information applications is low degree of scalability as well as in the case of quantum computations on reversible logic gates. At the material level, the scalability problem relates to limitations on the size and topology of the cluster states, that in turn leads to limitations on the number of logical operations and the volume of the processed data. The type of variables that describe the physical system plays a significant role since it determines the nature of the limitations.

In the case of discrete variables, one generates a cluster state on the base of single independent qubits (or qudits \cite{Zhou}). Single-photon sequences from quantum dots \cite{QDot} or single atoms located at the nodes of an optical lattice can be used \cite{Wunderlich}. At that, the probabilistic nature of the operations being performed (e.g. obtaining of single photons, qubit entangling, etc.) restricts the generation of a large-scale cluster state. Due to the low efficiency of these processes, the generation of a large-scale cluster state in practice can take an extremely long time, substantially exceeding the decoherence time of individual qubits. It proves to be  difficult to scale such systems.

For continuous variables, all operations on systems (oscillators) being in the Gaussian quadrature squeezed state that have been used to generate a cluster state are deterministic. For their realization, schemes based on light pulse trains \cite{OPO}, spin waves within atomic ensembles \cite{sways}, and eigenmodes of optomechanical systems \cite{OMS} are proposed. Moreover, methods for generating "hybrid" cluster states based on matter-field oscillators have been offered \cite{sways}. In this case, the limitations for generating a large-scale cluster state relate to a finite degree of quadrature squeezing of the oscillators. By analogy with Duan criterion \cite{Duan}, characterizing the measure of the entanglement of two quantum oscillators, van Loock-Furusawa criterion \cite{Furusawa} for cluster states shows that the entanglement of several ($N\geq2$) quantum oscillators depends directly on their initial degree of squeezing.
Since a large number of squeezed oscillators is required to generate a large-scale cluster state, it is clear that the squeezing degree should be high. Most of the early works on continuous variable cluster state aimed at demonstration of the fundamental possibility to perform one-way quantum computations estimate the degree of squeezing of individual oscillators by its limiting value - infinity. In practice, it turns out to be challenging to get oscillators with a high degree of squeezing. Light pulses with quadrature squeezing of $15$ dB to date have been obtained experimentally  \cite{Vahlbruch}.  At that, the question of the minimum squeezing degree of the quantum oscillator required to generate cluster states of different topologies based thereon has been little studied \cite{HAL}. 

In this paper, we will obtain a criterion determining the minimum degree of quadrature squeezing of the initial oscillators needed to generate a cluster state with a given topology. Using this criterion, we will find which of the cluster state nodes require the highest degree of squeezing for its generation. Then, in the case of  a given degree of squeezing, we estimate the maximum number of adjacent nodes entangled with a selected one. This will enable us to evaluate quantitatively the various quantum cluster topologies and formulate optimal generation strategies, based on these estimates.

\section{Continuous variables cluster state}

Cluster states are a type of quantum multipartite entangled states characterized by an undirected graph. To define the graph $G$  of the cluster state, one need to specify the set of nodes and the set of edges that reveal interconnections between the nodes. Examples are shown in Fig. \ref{Fig}. Every edge connecting $i$-th node with $j$-th one can be characterized by a real number $a_{ij} \in [-1,1]$, called the weight of the edge. The set of these weights defines an adjacency matrix $A$ that, in turn, completely defines the graph $G$. Thus, the graphs in Fig. \ref{Fig} (a) and Fig. \ref{Fig} (b) correspond to the adjacency matrices $A^{(1)}$ and  $A^{(2)}$, respectively:

\begin{align}
A^{(1)}=\begin{pmatrix}
0 & 1 & 0 & 0 \\
1 & 0 & 1 & 0 \\
0 & 1 & 0 & 1\\
0 & 0 & 1 & 0
\end{pmatrix}, \qquad A^{(2)}=\begin{pmatrix}
0 & -1/2 & 0 & 0 & -{1}/{2} & 0 \\
-{1}/{2} & 0 & -{1}/{2} & {1}/{2} & 0 & -{1}/{2} \\
0 & -{1}/{2} & 0 & 0 & {1}/{2} & 0\\
0 & {1}/{2} & 0 & 0 & {1}/{2} & 0\\
-{1}/{2} & 0 & {1}/{2} & {1}/{2} & 0 & {1}/{2}\\
0 & -{1}/{2} & 0 & 0 & {1}/{2} & 0
\end{pmatrix}.
\end{align}
\begin{figure}[H]
\begin{center}
\includegraphics[width=0.75\linewidth]{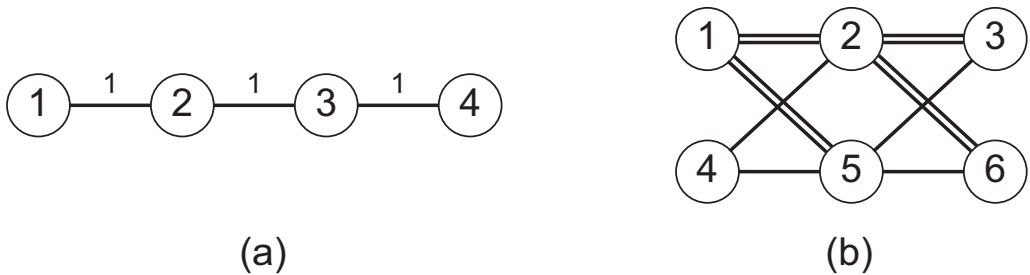}
\end{center}
\caption{Example of two graphs with different topologies. Graph (a): edges connecting adjacent nodes have the same weight equal to $1$. Graph (b): double and single lines are the edges with weights $-1/2$ and $1/2$, respectively.}
\label{Fig}
\end{figure}

A mathematical object -- a graph -- can be put in correspondence with a physical one -- quantum cluster state. As for cluster state generation, the following procedure is usually discussed: one considers $n$ independent quantum harmonic oscillators with squeezed quadratures, every oscillator is assigned by quadrature operators $\hat{x}$  and $\hat{y}$ that obey the canonical commutation relations
 \begin{align}
\left[ \hat{x}_j,\hat{y}_k\right]=\frac{i}{2}\delta _{j,k},
\end{align}
where $j$ and $k$ are numbers of the oscillators, $\delta _{j,k}$ is the Kronecker delta. All the oscillators are assumed to be squeezed in $\hat y$-quadratures  \cite{Gu}, i.e. their variances are less than those of the vacuum state.
\begin{align*} 
\langle \delta \hat{y}_j^2 \rangle < \frac{1}{4}, \qquad j=1,\dots,n.
\end{align*}

One entangles these subsystems in pairs so that coupling force between the $i$-th and $j$-th oscillators would correspond to the element $a_{ij}$ of the adjacency matrix $A$ to satisfy Eqs. (\ref{Null}) and (\ref{def_Null}), see below. It results in the cluster state of the physical system, corresponding to the graph $G$ defined by the adjacency matrix $A$. Note that the nodes of the graph $G$ would correspond to the quantum harmonic oscillators entangled in a given way. Such an entanglement can be described by Bogolyubov's transformation \cite{Bogoljubov} of the initial set of independent quadrature-squeezed oscillators:
 \begin{align} \label{TRB}
\hat{X}_j+i\hat{Y}_j=\sum_{k=1}^n u_{jk}\; \left(\hat{x}_k+i\hat{y}_k \right), \qquad j=1,\dots, n,
\end{align}
where $U=\lbrace u_{jk}\rbrace _{j,k=1}^n$ is a unitary matrix that specifies the set of transformations of the subsystems, so that the result corresponds to the adjacency matrix $A$ (connection of these matrices will be discussed below). $\hat{X}_j$ and $\hat{Y}_j$ are quadrature operators of the $j$-th node of the cluster state.

To describe the quantum statistical properties of discrete variable cluster states, stabilizers are usually used \cite{Raus2}. However, for continuous variables, the most natural way to describe the statistics of a cluster state is to use \textit{nullifier} operators defined for every node of the graph $G$.
\begin{align} \label{Null}
\hat{N}_j=\hat{Y}_j-\sum_{i=1}^n a_{ji}\; \hat{X}_i, \qquad j=1,\dots, n.
\end{align}
By definition, the physical system is in a quantum cluster state if the variances of all of its nullifiers tend to zero in the limit of infinite squeezing of the quantum harmonic oscillators used to generate it \cite{Gu}:
\begin{align} \label{def_Null}
\forall j=1,\dots ,n,\quad \lim \langle \delta \hat{N}_j^2\rangle = 0, \quad \text{при} \quad  \langle \delta\hat{y}^2_1 \rangle \rightarrow 0, \dots , \langle \delta\hat{y}^2_n \rangle \rightarrow 0.
\end{align}

\section{Bogoliubov transformation for cluster states}
To obtain the  desired condition on a minimal squeezing of the initial oscillators required for cluster state generation, we need to rewrite the Bogoliubov transformation $U$ given by the adjacency matrix $A$ in explicit form. Let us derive Eq. (\ref{TRB})  in a vector form
\begin{align} \label{deff_U}
\hat{\vec{X}}+i\hat{\vec{Y}}=U \left( \hat{\vec{x}}+i\hat{\vec{y}}\right)=\left( \Re U \hat{\vec{x}}-\Im U \hat{\vec{y}}\right)+i\left( \Re U \hat{\vec{y}}+\Im U \hat{\vec{x}}\right),
\end{align}
where $\hat{\vec{Y}}=\left(\hat{Y}_1,\hat{Y}_1,\dots, \hat{Y}_n \right)^T$, $\hat{\vec{X}}=\left(\hat{X}_1,\hat{X}_1,\dots, \hat{X}_n \right)^T$ are column vectors formed by the quadratures of the nodes of the cluster state. We introduce vector $\hat{\vec{N}}$, whose elements are the nullifiers (\ref{Null}), by the adjacency matrix $A$ as follows
\begin{align} \label{def_2}
\hat{\vec{N}}=\hat{\vec{Y}}-A \hat{\vec{X}}.
\end{align}
By Eq. (\ref{deff_U}), we express the quadratures of the nodes of the cluster state through the variables
of independent quadrature-squeezed oscillators that were used to generate the cluster, and substitute the result in Eq. (\ref{def_2})
 \begin{equation} \label{Null_1}
\hat{\vec{N}}=\Re U\hat{\vec{y}}+\Im U\hat{\vec{x}}-A\left(\Re U\hat{\vec{x}}-\Im U\hat{\vec{y}}\right)=\left(\Im U-A\cdot \Re U\right)\hat{\vec{x}}+\left(\Re U+A \cdot \Im U\right)\hat{\vec{y}}.
\end{equation}
In order for operators to be nullifiers, their variances have to tend to zero. We assume that all $\hat{y}$-quadratures are squeezed, hence, according to the Heisenberg's uncertainty principle, all the $\hat{x}$-quadratures are stretched. For this reason, the variances of the nullifiers (\ref{Null_1}) tend to zero only if  coefficients in front of the stretched quadratures are zero. Thus, the set of transformations $U$ has to satisfy the equality
\begin{align}
\Im U=A\cdot \Re U.
\end{align}
Hence, the nullifier vector $\hat{\vec{N}}$ and Bogoliubov transformation matrix $U$  are related as following
\begin{align} \label{0}
&\hat{\vec{N}}=\left(I+A^2\right)\Re U\;\vec{y}, \\
&U=\Re U+i\Im U=\left(I+iA\right)\Re U. \label{1}
\end{align}
Here, matrix $\Re U$ remains unknown. In order to specify it, we use the condition of its  unitarity
\begin{align*}
&U^{\dag}U=\left(\left(I+iA\right)\Re U\right)^{\dag}\left(I+iA\right)\Re U=(\Re U)^{T}\left(I+A^2\right)\Re U=I ,
\end{align*}
\begin{align} \label{33}
\left(I+A^2\right)=((\Re U)^{T})^{-1}(\Re U)^{-1}=(\Re U(\Re U)^T)^{-1} \Rightarrow \Re U\left(\Re U\right)^T=\left(I+A^2\right)^{-1}.
\end{align}
The resulting expressions arise from the fact that the matrix $\Re U$ is real, and the matrix $A$ is Hermitian.

 For further analysis we employ polar decomposition of the matrix $\Re U$. A polar decomposition is a representation of an arbitrary square matrix $M$ as a product of a Hermitian $H$ ($H=H^{\dag}$) and a unitary $\mathcal{U}$ ($\mathcal{U}^{\dag}\mathcal{U}=\mathcal{U}\mathcal{U}^{\dag}=I$) matrices, so $M=H\mathcal{U}$. Since the matrix $\Re U$ is real, the polar decomposition for it turns into a product of symmetric matrix $S$ ($S=S^T$) and arbitrary orthogonal matrix  $Q$ ($QQ^T=Q^TQ=I$). Substituting this expansion into Eq. (\ref{33}), we obtain
\begin{align*}
&SQ\left(SQ \right)^T=SQQ^TS=S^2=\left(I+A^2\right)^{-1}  \\
\Rightarrow &S=\left(I+A^2\right)^{-1/2} \\
\Rightarrow &\Re U=\left(I+A^2\right)^{-1/2}Q.
\end{align*}
Thus, Eqs. (\ref{0})  and  (\ref{1})  can be derived as
\begin{align} \label{2}
&\hat{\vec{N}}=\left(I+A^2\right)^{1/2}Q\;\vec{y},\\
&U=(I+iA)(I+A^2)^{-1/2}Q \label{3}.
\end{align}

Let us analyze the result. We have obtained the dependence of the nullifiers and transformation matrix $U$ on the configuration of the graph $G$. The dependences  (\ref{2})--(\ref{3}) contain  an arbitrary orthogonal matrix $Q$ to within these transformations are defined. The question that arises is what this matrix influences. Assuming that one-way quantum computations depend only on a cluster state topology, the authors in  \cite{HAL}  showed the influence of this matrix on the generation procedure. In addition, it can be assumed that via this matrix it is possible to minimize errors that appear in the generation process due to the non-ideality of physical systems.

Thus, for the cluster state characterized by the graph $G$ we have indicated the relation between the explicit form of the Bogoliubov transformation $U$ and the adjacency matrix $A$. Further, basing on
Eqs. (\ref{2}) and (\ref{3}), we will prove the theorem on the relation between the nullifiers' variances and the variances $\langle \delta\hat{y}^2_i\rangle$ in the case of identical initial oscillators used to generate the cluster state.

\section{Citerion of minimal squeezing degree}

To generate a quantum physical system in a cluster state, it is important to know the dependence of nullifier variances on the quadrature variances of independent quantum harmonic oscillators used for cluster generation. In general, if quadratures of the oscillators are squeezed differently, one could not exhibit the explicit formula. However, if oscillators
are squeezed identically, this dependence would be characterized only by the adjacency matrix $A$. Let us prove the following theorem.
\begin{theorem}
Suppose a cluster state that corresponds to the graph $G$  with the adjacency matrix  $A$, given by weight coefficients $a_{ij}$. Let $Q\in M^{n\times n}$ be the orthogonal matrix in a polar decomposition of the real part of the Bogoliubov transformation $U$ $(\Re U=\left(I+A^2\right)^{-1/2}Q)$. If quantum oscillators, used to generate the cluster state, had the same squeezing degree being statistically independent, then the variances of the cluster state nullifiers can be expressed in terms of the variances of the initially squeezed oscillators as follows
 \begin{align}
\langle \delta \hat{N}_j^2\rangle=\left(1+\sum _{i=1}^n a_{ij}^2 \right) \langle\delta\hat{y}^2\rangle , \quad j=1,\dots, n.
\end{align}
\end{theorem}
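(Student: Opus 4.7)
The plan is to start directly from the compact expression for the nullifier vector derived in the previous section, namely Eq.~(\ref{2}):
\begin{align*}
\hat{\vec{N}} = (I+A^2)^{1/2}\, Q\, \hat{\vec{y}}.
\end{align*}
Writing this component-wise, each nullifier is a linear combination $\hat{N}_j = \sum_k M_{jk}\hat{y}_k$ with $M := (I+A^2)^{1/2} Q$. Since the hypotheses say the initial oscillators are statistically independent with identical squeezed variance $\langle\delta\hat{y}^2\rangle$, the variance of a linear combination collapses to a weighted sum of squared coefficients, so $\langle\delta\hat{N}_j^2\rangle = \big(\sum_k M_{jk}^2\big)\langle\delta\hat{y}^2\rangle = (MM^T)_{jj}\langle\delta\hat{y}^2\rangle$.

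Next I would simplify $MM^T$. Using $Q^TQ=QQ^T=I$ (orthogonality, quoted from the polar-decomposition step preceding Eq.~(\ref{2})) and the fact that $(I+A^2)^{1/2}$ is symmetric (since $I+A^2$ is symmetric positive-definite, as $A$ is Hermitian and real, hence symmetric), one finds
\begin{align*}
MM^T = (I+A^2)^{1/2}\, Q Q^T\, (I+A^2)^{1/2} = I+A^2.
\end{align*}
Therefore $\langle\delta\hat{N}_j^2\rangle = (I+A^2)_{jj}\,\langle\delta\hat{y}^2\rangle$, and it only remains to evaluate the diagonal entry $(I+A^2)_{jj} = 1 + (A^2)_{jj} = 1 + \sum_i a_{ji}a_{ij}$. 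Using the symmetry $a_{ji}=a_{ij}$ of the adjacency matrix, this becomes $1+\sum_i a_{ij}^2$, which matches the claim.

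There is no real obstacle here: the substantive work (eliminating $A\cdot\Re U$ terms from the nullifier via $\Im U = A\,\Re U$, and identifying $\Re U(\Re U)^T = (I+A^2)^{-1}$) has already been done to arrive at Eq.~(\ref{2}). The only point that deserves care in the write-up is justifying the step $\langle\delta\hat{N}_j^2\rangle = \sum_k M_{jk}^2\langle\delta\hat{y}^2\rangle$: the coefficients $M_{jk}$ are real (both $(I+A^2)^{1/2}$ and $Q$ are real), so no cross terms involving $\hat{x}_k$ appear, and the statistical independence of the $\hat{y}_k$ together with their common variance eliminates all off-diagonal cross-correlations. That cleanly yields the stated formula.
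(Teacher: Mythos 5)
Your proposal is correct and follows essentially the same route as the paper: both start from $\hat{\vec{N}}=(I+A^2)^{1/2}Q\,\hat{\vec{y}}$, use statistical independence and equal variances to reduce the nullifier variance to a sum of squared real coefficients, invoke the orthogonality of $Q$ and the symmetry of $(I+A^2)^{1/2}$ to identify that sum with the diagonal entry of $I+A^2$, and finish with the symmetry of $A$. Your $(MM^T)_{jj}$ formulation is just the compact matrix-notation version of the paper's explicit index computation with the row vectors $\vec{q}_k$.
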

\begin{proof}
To prove the theorem, let us use Eq.  (\ref{2})
\begin{align}
\hat{\vec{N}}=\left(I+A^2\right)^{1/2}Q\;\hat\vec{y}.
\end{align}
We introduce notation $V=\left(I+A^2\right)^{1/2}$ and derive the $j$-th nullifier as
\begin{align*}
\hat{N}_j=\sum_{k=1}^nv_{jk}\sum _{p=1}^n q_{kp}\:\hat{y}_{p},
\end{align*}
where $v_{jk}$  are matrix elements $V$, $q_{kp}$ are matrix elements $Q$. Let us consider the nullifier variance
\begin{align} \label{sum_1}
\langle \delta \hat{N}_j^2\rangle&=\langle\delta \left( \sum_{k=1}^nv_{jk}\sum _{p=1}^n q_{kp}\:\hat{y}_{p}\right) ^2\rangle=\langle\delta \left( \sum_{p=1}^n\left(\sum _{k=1}^n v_{jk} q_{kp}\:\right)\hat{y}_{p}\right) ^2\rangle= \sum_{p=1}^n\left(\sum _{k=1}^n v_{jk} q_{kp}\right)^2 \langle\delta y^2 \rangle=\nonumber \\
&= \sum_{p=1}^n\left(\sum _{k=1}^n v_{jk} ^2q_{kp}^2+2\sum _{w=1}^n\sum _{r=w+1}^n \left(v_{jw} \;q_{wp}\ \right)\left(v_{jr} \;q_{rp}\ \right)\right) \langle\delta y^2 \rangle. 
\end{align}
Here, we employ statistical independence of the quantum oscillators, i.e. $\langle \delta \hat{y}_i\delta\hat{y}_i\rangle=\delta_{i,j}\langle\delta\hat{y}^2\rangle$, where $\delta_{i,j}$  is Kronecker delta. The first and the second terms of the multiplier on the right-hand side of the Eq. (\ref{sum_1}) can be derived as:
\begin{align} \label{s_1}
\sum_{p=1}^n\sum _{k=1}^n v_{jk} ^2q_{kp}^2=\sum _{k=1}^n \sum_{p=1}^n v_{jk} ^2q_{kp}^2=\sum _{k=1}^n v_{jk} ^2\sum_{p=1}^n q_{kp}^2=\sum _{k=1}^n v_{jk} ^2 \left(\vec{q}_k,\vec{q}_k\right),
\end{align}
\begin{align} \label{s_2}
\sum_{p=1}^n\left(\sum _{w=1}^n\sum _{r=w+1}^n \left(v_{jw} \;q_{wp}\ \right)\left(v_{jr} \;q_{rp}\ \right)\right)=\sum _{w=1}^n\sum _{r=w+1}^n \left(v_{jw} \; v_{jr} \right)\sum_{p=1}^n\left(q_{wp} \;q_{rp}\ \right)=\sum _{w=1}^n\sum _{r=w+1}^n \left(v_{jw} \; v_{jr} \right)\left( \vec{q}_w,\vec{q}_r\right).
\end{align}
$\vec{q}_i=\left(q_{i1},q_{i2},\dots,q_{in} \right)^T$ is a row vector corresponding to the $i$-th row of the matrix $Q$. By definition, an orthogonal matrix is a matrix with columns and rows all  orthonormal,
i.e. $\left( \vec{q}_i,\vec{q}_j\right)=\delta _{i,j}$. Taking into account Eqs. (\ref{s_1}) and (\ref{s_2}), the Eq. (\ref{sum_1}) for nullifier variance can be derived as
\begin{align} 
\langle \delta \hat{N}_j^2\rangle = \sum_{k=1}^n v_{jk} ^2 \langle\delta y^2 \rangle. 
\end{align}

Let us consider the matrix $V=\left(I+A^2\right) ^{1/2}$. Since the adjacency matrix for the cluster state is symmetric ($A=A^T$), it follows that
\begin{align*}
(A^2)^T=(AA)^T=A^TA^T=AA=A^2.
\end{align*}
That means $\left(I+A^2\right)^T=\left(I+A^2\right)$, so  $V$ is also symmetric $(V=V^T)$ since

\begin{align*}
(VV)^T=\left(I+A^2\right)^T=\left(I+A^2\right)=VV.
\end{align*}
Thus, for the diagonal elements $[V^2]_{jj}$, on the one hand, we have
\begin{align*}
[V^2]_{jj}=\sum_{k=1}^n v_{jk} ^2,
\end{align*}
on the other hand,
\begin{align*}
[V^2]_{jj}=[I+A^2]_{jj}=1+[A^2]_{jj}.
\end{align*}
Since $A$ is an adjacency matrix, its symmetry implies that $[A^2]_{jj}=\sum \limits_{i=1}^n a_{ij}^2$. Summarizing the
outcome, we derive the resulting expression for nullifiers
\begin{align} \label{theorem}
\langle \delta \hat{N}_j^2\rangle = \left(1+\sum _{i=1}^n a_{ij}^2 \right) \langle\delta y^2 \rangle, \quad j=1, \dots, n.
\end{align}
\end{proof}
The proven theorem  provides us with a tool for easy calculation of the nullifiers' variances that, in turn, allows one to answer the question of separability of the cluster.
\begin{exmp}
Let us consider the cluster state corresponding to the graph in Fig. \ref{Fig} (b), where double and single lines denote edges with weights $-1/2$ and $1/2$, respectively. According to the theorem proved above, the nullifier variances are equal to
\begin{align*}
\langle \delta \hat{N}_1^2\rangle = \frac{3}{2} \langle\delta y^2 \rangle , \quad \langle \delta \hat{N}_2^2\rangle = 2 \langle\delta y^2 \rangle, \quad \langle \delta \hat{N}_3^2\rangle = \frac{3}{2} \langle\delta y^2 \rangle,\\
\langle \delta \hat{N}_4^2\rangle = \frac{3}{2} \langle\delta y^2 \rangle, \quad \langle \delta \hat{N}_5^2\rangle = 2 \langle\delta y^2 \rangle, \quad \langle \delta \hat{N}_6^2\rangle = \frac{3}{2} \langle\delta y^2 \rangle.
\end{align*}
\end{exmp}
\begin{corollary}
In the case of "unweighted" cluster state, i.e. all its non-zero weight coefficients are equal to one, we have
\begin{align} 
\langle \delta \hat{N}_j^2\rangle = \left(1+\dim \left( Nb\left[j\right] \right) \right) \langle\delta y^2 \rangle, \quad j=1, \dots, n,
\end{align}
where $\dim \left( Nb\left[j\right] \right)$ is the dimension of the set of adjacent nodes (the number of neighbors) for the $j$-th node of the graph $G$.
\end{corollary}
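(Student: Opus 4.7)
The plan is to apply the preceding theorem directly and exploit the very special structure of an unweighted adjacency matrix. By hypothesis every weight coefficient satisfies $a_{ij}\in\{0,1\}$, so squaring is idempotent on each entry: $a_{ij}^{2}=a_{ij}$. Consequently the inner sum in the theorem simplifies as
\begin{align*}
\sum_{i=1}^{n} a_{ij}^{2} \;=\; \sum_{i=1}^{n} a_{ij}.
\end{align*}

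Next I would identify this sum combinatorially. By construction of the adjacency matrix, $a_{ij}=1$ precisely when nodes $i$ and $j$ are connected by an edge, and $a_{ij}=0$ otherwise; hence $\sum_{i=1}^{n} a_{ij}$ counts exactly the number of vertices of $G$ adjacent to vertex $j$, which is $\dim\bigl(Nb[j]\bigr)$ in the notation of the statement. Substituting this identification into the conclusion of the theorem yields
\begin{align*}
\langle \delta \hat{N}_j^2\rangle \;=\; \bigl(1+\dim(Nb[j])\bigr)\,\langle \delta y^2\rangle, \qquad j=1,\dots,n,
\end{align*}
which is the desired formula.

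There is no real obstacle here, since the corollary is a direct specialization of the theorem; the only point worth emphasizing is the idempotence observation $a_{ij}^{2}=a_{ij}$ on $\{0,1\}$, which is what turns a sum of squared weights into a cardinality of a neighbor set. Nothing beyond the theorem and the definition of an unweighted graph is needed.
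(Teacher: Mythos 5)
Your proof is correct and follows essentially the same route as the paper: both simply specialize the theorem's formula by observing that for an unweighted graph the sum $\sum_{i} a_{ij}^{2}$ reduces to the number of neighbors of node $j$. Your explicit remark that $a_{ij}^{2}=a_{ij}$ on $\{0,1\}$ just makes the paper's one-line observation slightly more detailed.
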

\begin{proof}
From Eq. (\ref{theorem}) it follows that if all non-zero weight coefficients $a_{ij}$ are equal to one, then
the sum on the right-hand side of Eq. (\ref{theorem}) becomes the dimension of the set of adjacent nodes for the $j$-th node ($\dim \left( Nb\left[j\right] \right)$):
\begin{align}
\langle \delta \hat{N}_j^2\rangle =  \left( 1+\dim \left( Nb\left[j\right] \right) \right) \langle\delta y^2 \rangle, \quad j=1, \dots, n.
\end{align}
\end{proof}
Thus, in the case of "unweighted" graph the variance of the $j$-th nullifier is determined only
by the number of nodes of the graph  $G$ connected with the $j$-th node.

\begin{exmp} \label{ex_1}
Let us consider an "unweighted" linear cluster state that corresponds to the $4$-node linear graph in Fig. \ref{Fig} (a). According to Corollary 1, the nullifier variances of this state are equal to
\begin{align*}
\langle \delta \hat{N}_1^2\rangle = 2 \langle\delta y^2 \rangle, \quad
\langle \delta \hat{N}_2^2\rangle = 3 \langle\delta y^2 \rangle, \quad
\langle \delta \hat{N}_3^2\rangle = 3 \langle\delta y^2 \rangle, \quad
\langle \delta \hat{N}_4^2\rangle = 2 \langle\delta y^2 \rangle.
\end{align*}
\end{exmp}

Having proved the theorem on the cluster state nullifier variances in the case of identical independent
quantum oscillators used for its generation, we can analyze  the cluster
via van Loock-Furusawa separability criterion \cite{Furusawa}. This criterion allows one to specify the
minimum squeezing of oscillators required to generate a cluster of a given topology based thereon.
Violation of this condition means that the state is separable.
\begin{corollary}
To generate a cluster state corresponding to the graph $G$ with the adjacency matrix $A$ given by the weight coefficients $a_{ij}$, the squeezing of every initial quantum oscillator has to satisfy the inequality
\begin{align*} 
\langle\delta y^2 \rangle < \min_{(i,j)} \left[\frac{ |a_{ij}|}{2+\sum \limits_{k=1}^n a_{ki}^2+\sum \limits_{l=1}^n a_{lj}^2 } \right], 
\end{align*}
where the minimum on the right-hand side is taken all over the pairs of the adjacent nodes $i$ and $j$.
\begin{proof}  Let us apply van Loock-Furusawa separability criterion to the cluster state corresponding to the graph $G$. In general, this criterion determines the possibility of separating the set of $n$ elements
described by canonical variables $\{\hat{X}_k,\hat{Y}_k\} _{k=1}^n$ by $M$ independent subsets $S_r \left(r=1,\dots ,M \right)$. Mathematically, it can be derived as inequality  \cite{Ukai,me}
\begin{eqnarray} \label{vLF}
&&\langle \delta \hat{b}^2 \rangle+\langle \delta \hat{c} ^2 \rangle \geqslant
\frac{1}{2}\sum_{r=1}^M|\sum_{k\in S_r}\left(h_{k}\tilde g_{k}-\tilde h_{k} g_{k}\right)|,
\end{eqnarray}
where $\hat{b}=\sum \limits _{k=1} ^n\left[h_{k}\hat{X}_k+ g_{k}\hat{Y}_k\right]$, $\hat{c}=\sum \limits _{k=1}^n\left[\tilde h_{k}\hat{X}_k+\tilde g_{k}\hat{Y}_k\right]$  are an auxiliary Hermitian operators that
are linear combinations of all the canonical variables, $h_{k},\tilde h_{k},g_{k},\tilde g_{k}$ are real constants. In our case,
quadratures of the cluster state are taken as canonical variables. We choose the constants
$\lbrace h_{k},\tilde h_{k},g_{k},\tilde g_{k} \rbrace _{k=1}^n$  so that the operators $\hat{b}$ and  $\hat{c}$ turn into nullifiers of adjacent nodes of the cluster state, i.e $\hat{b}=\hat{N}_i=\hat{Y}_i-\sum \limits_{k=1}^n a_{ik}\hat{X}_k$ and $\hat{c}=\hat{N}_j=\hat{Y}_j-\sum \limits_{k=1}^n a_{jk}\hat{X}_k$. Thus, we obtain an inequality
\begin{eqnarray}
\label{Eq27}
&&\langle \delta \hat N_i^2 \rangle+\langle \delta \hat N_j^2 \rangle \geqslant \begin{cases}0, &\quad\text{when} \quad i \in S_r, \quad j \in S_{r}\\
|a_{ij}|, &\quad\text{when} \quad i \in S_r, \quad j \in S_{r'}\end{cases},
\end{eqnarray}
where the nodes $i$ and $j$ of the graph $G$ are connected by an edge with the weight coefficient $a_{ij}$.
The first condition in Eq. (\ref{Eq27}) is trivial since it is always met and carries no information about the connection
between the nodes. The second one is met only if nodes $i$ and $j$ belong to different independent
subsystems. This condition determines the separability criterion for two adjacent nodes of the
cluster state. 
Nodes would be inseparable if inequality is violated, i.e. 
 \begin{eqnarray} \label{Eq28}
&&\langle \delta \hat N_i^2 \rangle+\langle \delta \hat N_j^2 \rangle \textless |a_{ij}|.
\end{eqnarray}

Let us substitute the nullifier variances (\ref{theorem}) in the inequality (\ref{Eq28})
\begin{align*}
\langle \delta \hat{N}_i^2\rangle +\langle \delta \hat{N}_j^2\rangle= \left(2+\sum _{k=1}^n a_{ki}^2+\sum _{l=1}^n a_{lj}^2 \right) \langle\delta y^2 \rangle \textless |a_{ij}|.
\end{align*}
For the variance of the $\hat{y}$-quadrature, we obtain
\begin{align} \label{less}
\langle\delta y^2 \rangle <\frac{ |a_{ij}|}{2+\sum \limits_{k=1}^n a_{ki}^2+\sum \limits_{l=1}^n a_{lj}^2 }.
\end{align}
Depending on the topology of the graph, the right-hand side of this equation may vary. Since the
variances of the nullifiers should tend to zero to generate the cluster state, we minimize the value
on the right-hand side of Eq. (\ref{less}) by the numbers $i$ and $j$  of adjacent nodes of the graph $G$ as
\begin{align} \label{end_2}
\langle\delta y^2 \rangle < \min_{(i,j)} \left[\frac{ |a_{ij}|}{2+\sum \limits_{k=1}^n a_{ki}^2+\sum \limits_{l=1}^n a_{lj}^2 } \right]. 
\end{align}

Thus, we have obtained an estimation criterion on the squeezing of the $\hat{y}$-quadratures of the
initial quantum oscillators used to generate a cluster state.
\end{proof}
\end{corollary}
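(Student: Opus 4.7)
The plan is to combine the theorem just established with the van Loock--Furusawa separability criterion, applied edge by edge to the graph $G$. First I would fix an arbitrary edge $(i,j)$ of $G$ and specialize the auxiliary Hermitian operators $\hat b,\hat c$ appearing in the criterion to be exactly the two nullifiers $\hat N_i$ and $\hat N_j$, which amounts to setting $g_k=\delta_{ki}$, $h_k=-a_{ik}$, $\tilde g_k=\delta_{kj}$, $\tilde h_k=-a_{jk}$ in the standard form $\hat b=\sum_k(h_k\hat X_k+g_k\hat Y_k)$. With these choices the symplectic-type quantity $\sum_k(h_k\tilde g_k-\tilde h_k g_k)$ picks up nonzero contributions only at $k=i$ and $k=j$, producing $\pm a_{ij}$ there; summing over the blocks of a tested partition then yields $0$ when $i$ and $j$ belong to the same block and $|a_{ij}|$ when they belong to different blocks.

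Next I would read the resulting criterion contrapositively: the nontrivial bound $\langle\delta\hat N_i^2\rangle+\langle\delta\hat N_j^2\rangle\ge |a_{ij}|$ rules out any partition that separates $i$ from $j$, so enforcing the strict inequality
\begin{equation*}
\langle\delta\hat N_i^2\rangle+\langle\delta\hat N_j^2\rangle<|a_{ij}|
\end{equation*}
along every edge of $G$ guarantees that no two adjacent nodes can ever be sorted into different blocks of a valid partition, which is precisely the pairwise inseparability required of a bona-fide cluster state.

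Finally I would substitute the theorem's per-node formula $\langle\delta\hat N_j^2\rangle=(1+\sum_k a_{kj}^2)\langle\delta\hat y^2\rangle$ into the edge inequality, pull out the common factor $\langle\delta\hat y^2\rangle$, and divide by the (strictly positive) coefficient $2+\sum_k a_{ki}^2+\sum_l a_{lj}^2$. This yields an edge-local bound $\langle\delta\hat y^2\rangle<|a_{ij}|/(2+\sum_k a_{ki}^2+\sum_l a_{lj}^2)$; imposing it on every edge at once forces $\langle\delta\hat y^2\rangle$ below the minimum of the right-hand side over all adjacent pairs, which is exactly the stated criterion.

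The main difficulty I anticipate is not algebraic but conceptual: van Loock--Furusawa is only a \emph{necessary} condition for separability along a prescribed partition, so one has to be careful to package the pointwise obstructions (one per edge) into the global statement that no nontrivial partition separates any adjacent pair. Once that logical pivot is spelled out, the remainder of the proof is a direct substitution into the theorem followed by a minimization over edges.
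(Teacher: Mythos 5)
Your proposal is correct and follows essentially the same route as the paper: you specialize the van Loock--Furusawa operators $\hat b,\hat c$ to the nullifiers $\hat N_i,\hat N_j$ of an adjacent pair, obtain the bound $\langle\delta\hat N_i^2\rangle+\langle\delta\hat N_j^2\rangle\geq|a_{ij}|$ for any partition separating $i$ from $j$, demand its violation on every edge, and substitute the theorem's variance formula before minimizing over adjacent pairs. Your explicit computation of the coefficients $h_k,g_k,\tilde h_k,\tilde g_k$ and your remark on packaging the per-edge obstructions into a global inseparability statement merely spell out details the paper leaves implicit.
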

Let us consider an application of Corollary 2.

\begin{exmp}
Let us consider an "unweighted" \ linear cluster state that corresponds to a 4-node linear graph (Fig. 1a). For this  cluster state, there are two types of conditions
\begin{align}
\langle\delta y^2 \rangle < \frac{1}{5}, \quad \langle\delta y^2 \rangle < \frac{1}{6}.
\end{align}
Since we should choose the minimum, the condition for the cluster to be inseparable is
\begin{align}
\langle\delta y^2 \rangle < \frac{1}{6}.
\end{align}
This condition indicates that to generate the cluster with the given topology it is sufficient to have $4$ oscillators with quadratures squeezed more than $-1,77$ dB.
\end{exmp}
\begin{exmp}
Let us consider the cluster state corresponding to the graph in Fig. \ref{Fig} (b), where double and single lines denote edges with the weights equal to $-1/2$ and $1/2$, respectively. For this state, the inseparability condition is
\begin{align}
\langle\delta y^2 \rangle < \frac{1}{7}.
\end{align}
The required squeezing, in this case, should exceed $-2,43$ dB.
\end{exmp}
The considered examples demonstrate that a finite -- experimentally realizable -- squeezing is sufficient
to generate a cluster state. Note that a more complex and branched structure of the cluster
requires more stringent conditions for its generation. In this respect, the use of simple (e.g. linear)
clusters is preferable if they satisfy computational needs.

The question arises whether such a squeezing is sufficient to implement quantum computations
on the cluster state. The answer is ambiguous. On the one hand, this squeezing would be sufficient
for single quantum gates realization. On the other hand, when considering quantum computations
with a large number of gates, the errors would accumulate and eventually could abolish all the
computation results. It was long believed that an arbitrarily large number of logic gates can
be performed only in a case of infinite squeezing of the oscillators used to generate the cluster states. However, it was shown in \cite{Menicucci}  that any errors can be corrected at
initial squeezing of $20,5$ dB. Unfortunately, at the moment such a squeezing has not yet been
experimentally implemented. Thus, even a low experimentally realizable squeezing can be a
resource for cluster generating and performing of limited quantum computations. In this case, the
"redundant data" procedure \cite{Nielsen} can be used for error correction.

The statement above also allows us to consider the inverse problem: a cluster of what structure
can be generated with a given resource?
\begin{corollary}
When generating an "unweighted" cluster state on the base of statistically independent oscillators with a given squeezing degree $\langle \delta \hat{y}_{fix}^2 \rangle$, the maximum number of neighbors for two adjacent
nodes in the cluster is estimated by an inequality
\begin{align}  \label{end_3}
\max_{(i,j)} \left[\dim \left( Nb\left[j\right] \right)+\dim  \left( Nb\left[i\right]\right) \right]  \textless \frac{1}{\langle \delta \hat{y}_{fix}^2 \rangle}-2,
\end{align}
where the maximum on the left-hand side is taken all over the pairs of adjacent nodes $i$ and $j$.
\begin{proof}
Let us rewrite the inequality (\ref{end_2}) as
\begin{align} 
\frac{2+\max \limits_{(i,j)}  \left(\sum \limits_{k=1}^n a_{ki}^2+\sum \limits_{l=1}^n a_{lj}^2 \right)}{\min \limits_{(i,j)} \left( |a_{ij}|\right)} \textless  \frac{1}{\langle\delta y^2 \rangle} . 
\end{align}
Since we consider an \textit{unweighted} cluster state, we replace all non-zero weight coefficients $a_{ij}$
by $1$. Due to a fixed squeezing for all oscillators we put $\langle\delta y^2 \rangle=\langle\delta y_{fix}^2 \rangle$. That results in the
Eq. (\ref{end_3}).
\end{proof}
\end{corollary}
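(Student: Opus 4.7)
The plan is to reuse the bound established in Corollary 2 and specialize it to the unweighted case, so essentially the whole proof is algebraic rearrangement. First I would take the inequality
\begin{align*}
\langle\delta y^2\rangle < \min_{(i,j)} \frac{|a_{ij}|}{2+\sum_{k=1}^n a_{ki}^2+\sum_{l=1}^n a_{lj}^2}
\end{align*}
and invert both sides. Because the right-hand side is a positive minimum over a finite set, inversion flips both the direction of the inequality and the min into a max, yielding
\begin{align*}
\frac{1}{\langle\delta y^2\rangle} > \max_{(i,j)} \frac{2+\sum_{k=1}^n a_{ki}^2+\sum_{l=1}^n a_{lj}^2}{|a_{ij}|}.
\end{align*}
This is precisely the intermediate form that appears in the excerpt, so this first step is bookkeeping rather than a real difficulty.

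Next I would specialize to an unweighted graph. Here every non-zero $a_{ij}$ equals $1$, so for any pair $(i,j)$ of adjacent nodes the denominator $|a_{ij}|$ is exactly $1$, and the sum $\sum_{k=1}^n a_{ki}^2$ collapses into a count of the neighbors of the $i$-th node, i.e.\ $\dim(Nb[i])$. Substituting this and replacing $\langle\delta y^2\rangle$ by the fixed value $\langle\delta\hat{y}_{fix}^2\rangle$, the inequality becomes
\begin{align*}
\max_{(i,j)}\bigl[2+\dim(Nb[i])+\dim(Nb[j])\bigr] < \frac{1}{\langle\delta\hat{y}_{fix}^2\rangle},
\end{align*}
and moving the constant $2$ to the right-hand side gives the stated bound.

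The only point that requires mild care is that taking the max over adjacent pairs of a sum of two per-node quantities is consistent with what Corollary 2 gives: the numerator there is already indexed by the pair $(i,j)$, and for unweighted graphs it decouples cleanly into $\dim(Nb[i])+\dim(Nb[j])$, so no additional estimate (such as bounding a max of a sum by the sum of maxes) is needed. I do not expect any genuine obstacle; if something were to go wrong, it would be at the inversion step, where one must verify that the denominator in Corollary 2 is strictly positive and that $|a_{ij}|>0$ for the adjacent pairs considered, both of which hold by definition of an edge of the graph $G$.
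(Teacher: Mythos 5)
Your proposal is correct and follows essentially the same route as the paper: invert the bound from Corollary 2, set all non-zero $|a_{ij}|=1$, identify $\sum_k a_{ki}^2$ with $\dim(Nb[i])$, and move the constant $2$ across. Your intermediate step, $\max_{(i,j)}$ of the whole ratio, is in fact a slightly cleaner rendering than the paper's version (which separately takes the max of the numerator and the min of the denominator), but the two coincide in the unweighted case since every relevant $|a_{ij}|$ equals $1$.
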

This corollary shows the maximum number of edges coming from two adjacent nodes of the graph at a given squeezing degree $\langle\delta y_{fix}^2 \rangle$. Let us consider an example of the application of Corollary 3.

\begin{exmp}
Let us suppose that we have oscillators with $\hat{y}$-quadrature squeezing of $6$ dB, i.e $\langle\delta y^2 \rangle\approx 0,06$, and we want to generate an unweighted cluster state. Via Corollary 3 we obtain
a condition on the maximum number of edges of two adjacent nodes in the cluster state
\begin{align}  
\max_{(i,j)} \left[\dim \left( Nb\left[j\right] \right)+\dim  \left( Nb\left[i\right]\right) \right]  \textless 14,
\end{align}
Hence, the maximum number of edges of two adjacent nodes should not exceed $13$. Next, we should distribute these $13$ edges between two nodes so that the designed  cluster state can perform certain quantum computation.
\end{exmp}   
\section{Conclusion}

In this paper we discussed the generation of cluster states basing on the identical independent quadrature-squeezed oscillators. 
We have shown that with the help of the elements of the adjacency matrix the variances of the cluster state nullifiers can be expressed through the variances of the quadratures (or the squeezing degree) of the oscillators. 
This allowed us to formulate a criterion of the minimum degree of quadrature squeezing required to generate a cluster state
with a given topology. With this criterion it was shown that the minimum squeezing degree is determined by the nodes of the cluster  that are connected with the largest number of the adjacent nodes. We estimated the maximum possible number of adjacent nodes on the graph of  a cluster, depending on the squeezing degree of the initial oscillators.

The study of the cluster state topology is interesting from two perspectives: for generating a cluster with a given topology on the basis of the available resource and for organizing computation in limited conditions. The first issue we discussed in this paper, and the second one points the direction for further research.
 
\section*{Acknowledgment}
The reported study was supported by Russian Science Foundation (Project No. 17-72-10171).

\end{document}